\documentclass[journal]{IEEEtran}

\usepackage{amsmath,amssymb,amsfonts}
\usepackage{mathtools}
\usepackage{graphicx}
\usepackage{cite}
\usepackage{stfloats}
\usepackage{bm}
\usepackage[caption=false,font=footnotesize,farskip=0pt,captionskip=2pt]{subfig}
\newtheorem{lemma}{Lemma}

\newcommand{\Nt}{N_{\mathrm{t}}}
\newcommand{\Pt}{P_{\mathrm{t}}}
\newcommand{\bR}{\mathbf{R}}
\newcommand{\ba}{\mathbf{a}}
\newcommand{\aMC}{\ba_{\mathrm{MC}}}

\newcommand{\arob}{\ba_{\mathrm{rob}}}
\newcommand{\bh}{\mathbf{h}}
\newcommand{\bw}{\mathbf{w}}
\newcommand{\bx}{\mathbf{x}}
\newcommand{\bI}{\mathbf{I}}
\newcommand{\bZ}{\mathbf{Z}}
\newcommand{\bZb}{\bar{\mathbf{Z}}}
\newcommand{\bZrob}{\mathbf{Z}_{\mathrm{rob}}}
\newcommand{\bDel}{\bm{\Delta}}
\newcommand{\bPhi}{\bm{\Phi}}
\newcommand{\bPsi}{\bm{\Psi}}
\newcommand{\CN}{\mathcal{CN}}
\newcommand{\bXi}{\bm{\Xi}}
\newcommand{\bQ}{\mathbf{Q}}
\newcommand{\hMC}{\bh_{\mathrm{MC},k}}
\newcommand{\hrob}{\bh_{\mathrm{rob},k}}
\newcommand{\Rea}{\mathrm{Re}}
\newcommand{\epss}{\varepsilon_{\mathrm{s}}}
\newcommand{\epsc}{\varepsilon_{\mathrm{c},k}}

\makeatletter
\g@addto@macro\normalsize{%
  \setlength{\abovedisplayskip}{3pt plus 1pt minus 1pt}%
  \setlength{\belowdisplayskip}{3pt plus 1pt minus 1pt}%
\setlength{\abovedisplayshortskip}{3pt plus 1pt}
\setlength{\belowdisplayshortskip}{3pt plus 1pt}
}
\makeatother

\begin{document}
\title{Robust Beamforming Design for Integrated Sensing and Communications with Mutual Coupling Effect}
\author{Jieon~Maeng,~\IEEEmembership{Student~Member,~IEEE,}
        and~Kawon~Han,~\IEEEmembership{Member,~IEEE}%
\thanks{J. Maeng and K. Han are with the Department of Electrical Engineering, Ulsan National Institute of Science and Technology (UNIST) (e-mail: jemaeng@unist.ac.kr).}%
}
\maketitle
\raggedbottom

\begin{abstract}
Integrated sensing and communications (ISAC) is a key technology for next-generation wireless networks, enabling communication and radar sensing over shared spectral and hardware resources. In practical multi-user multiple-input multiple-output (MU-MIMO) ISAC transmitters, however, mutual coupling (MC) between antenna elements distorts the array steering vector and each communication user (CU) channel, so that the sensing beampattern deviates from the desired one and the communication link to each user degrades. To address this limitation, we propose a robust MC-compensated beamforming design that guarantees both the sensing and communication performance of MU-MIMO ISAC transmitters against the residual MC error. We introduce a residual error on the MC matrix, so that a norm-bounded residual error induces both the sensing beampattern uncertainty and the communication channel uncertainty. The transmit covariance is then optimized against the worst-case of each uncertainty, minimizing the worst-case beampattern matching mean-squared error (MSE) for sensing while guaranteeing the signal-to-interference-plus-noise ratio (SINR) for each CU. Each worst-case constraint is converted into a linear matrix inequality, and the problem becomes a convex semidefinite program (SDP). Numerical results show that the proposed robust design attains both a lower sensing beampattern matching MSE and a higher communication SINR than those of the conventional designs, with an advantage that widens as the residual error grows.
\end{abstract}

\begin{IEEEkeywords}
Integrated sensing and communications (ISAC), mutual coupling, robust beamforming, patch antenna array, worst-case robust optimization
\end{IEEEkeywords}

\section{Introduction}
\IEEEPARstart{I}{ntegrated} sensing and communications (ISAC) is a key technology for next-generation wireless systems, allowing a single transceiver to simultaneously perform communication and radar sensing over shared spectral and hardware resources~\cite{9737357,11184506}. In multi-user multiple-input multiple-output (MU-MIMO) ISAC systems, transmit beamforming shapes a desired sensing beampattern while guaranteeing the signal-to-interference-plus-noise ratio (SINR) requirement of each communication user (CU)~\cite{8288677}.

\begin{figure}[!t]
\centering
\includegraphics[width=0.8\columnwidth]{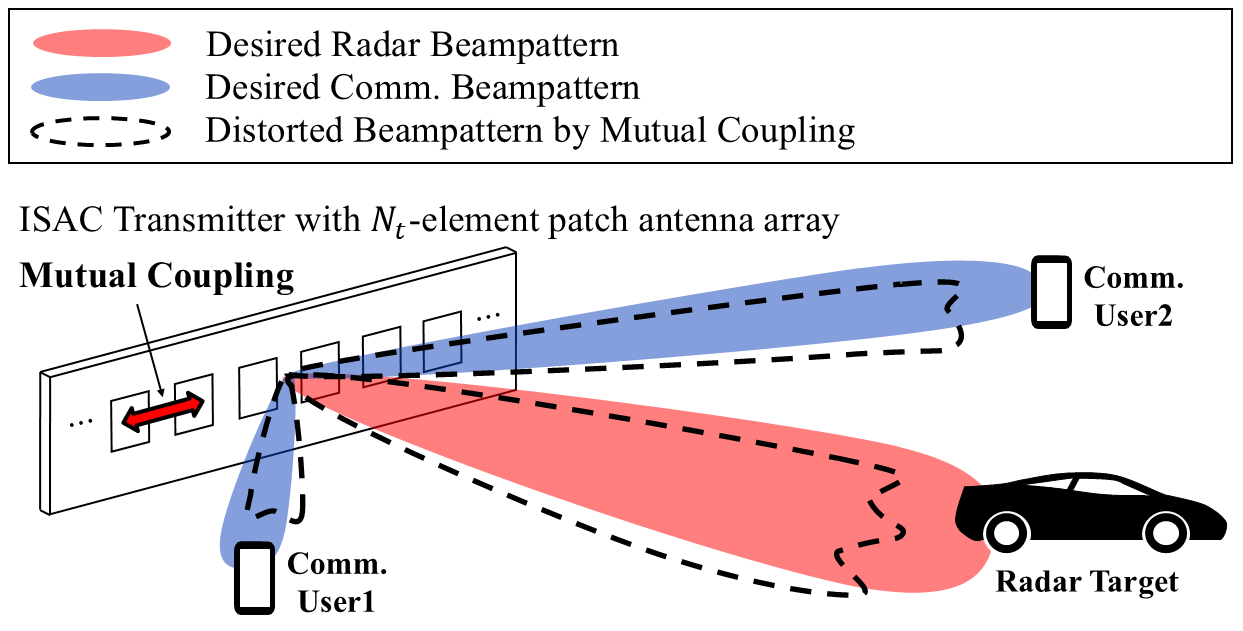}%
\caption{MU-MIMO ISAC beamforming with an $\Nt$-element patch array, where mutual coupling degrades both sensing and communication.}
\label{fig:scenario}
\end{figure}

Conventional ISAC beamforming is built on the ideal steering vector of the antenna array, which assumes no electromagnetic (EM) interaction between elements. In the antenna arrays of practical ISAC transmitters, however, this assumption breaks down. The mutual coupling (MC) effect between closely spaced elements becomes non-negligible, as the fields radiated by each element couple into the adjacent elements~\cite{1143128}.

This MC-induced distortion can be mitigated by MC-compensated beamforming, which replaces the ideal array response with the coupled one obtained from an MC model of the antenna array. Such a model, however, relies on idealized physical conditions that hold only approximately in practice, leaving a residual MC uncertainty. Under this uncertainty, in MIMO radar sensing, MC distorts the transmit beampattern and degrades target detection and parameter estimation, while in MU-MIMO communication it reduces the beamforming gain at each user equipment (UE) and corrupts the channel state information (CSI), as illustrated in Fig.~\ref{fig:scenario}.

Although MC-compensated designs replace the ideal array response with the one predicted by a physics-based MC model~\cite{11175425,11543322}, they evaluate the sensing beampattern and the channel at a fixed response, offering no robustness against the residual MC error. Existing robust ISAC designs ensure the worst-case performance against CSI errors~\cite{10153696,11072251} and target location uncertainty~\cite{10666854,10056405}. However, their uncertainty sets are centered at the ideal response, so the error bounds must absorb the full MC-induced deviation in addition to the residual error, making the design unnecessarily conservative. Since the residual error lies on the MC matrix and distorts steering vectors and user channels, the uncertainty set should be centered at the MC model.
Such ISAC beamforming, robust to the residual MC error in both radar sensing and communication, has not yet been investigated.

To address this limitation, we propose a robust MC-compensated ISAC beamforming design that provides a worst-case performance guarantee for both radar sensing and MU-MIMO communication under residual MC errors. The main contributions of this paper are summarized as follows. First, unlike the ideal-centered robust designs of~\cite{10153696,11072251,10666854,10056405}, we place a common norm-bounded residual error on the MC matrix of a physics-based model~\cite{balanis2016antenna}, which induces the sensing steering vector and communication channel uncertainties. Every uncertainty set is thus centered at the physics-based response, and each error bound covers only the residual MC error via an interpretable parameter.
Second, we minimize the worst-case beampattern matching mean-squared error (MSE) while guaranteeing the worst-case SINR of every CU. We cast both worst-case constraints into one common quadratic form, convert them into linear matrix inequalities (LMIs), and develop a convex semidefinite program (SDP) that enforces both worst-case guarantees, at a complexity of the same polynomial order as the non-robust designs.
\section{System Model}
\label{sec:model}
\subsection{Communication Model}
Consider the MU-MIMO ISAC scenario in Fig.~\ref{fig:scenario}, where a transmitter with an $\Nt$-element patch antenna array serves $K$ single-antenna CUs while sensing a radar target within the region-of-interest (ROI). Following the standard ISAC signal model~\cite{11548574}, the transmitted signal is
\begin{equation}
\bx=\mathbf{W}_c\,\mathbf{c}+\mathbf{W}_s\,\mathbf{s},
\label{eq:txsig}
\end{equation}
where $\mathbf{W}_c=[\bw_{c,1},\ldots,\bw_{c,K}]\in\mathbb{C}^{\Nt\times K}$ collects the communication beamformers, $\mathbf{c}\in\mathbb{C}^{K}$ is the data-symbol vector of the $K$ CUs, and $\mathbf{W}_s\in\mathbb{C}^{\Nt\times\Nt}$ and $\mathbf{s}\in\mathbb{C}^{\Nt}$ are the beamforming matrix and waveform of a dedicated sensing signal. The symbols and waveforms are zero-mean, unit-power, and mutually uncorrelated,
\begin{equation}
\mathbb{E}\big[\mathbf{c}\mathbf{c}^{H}\big]=\bI_{K},\quad
\mathbb{E}\big[\mathbf{s}\mathbf{s}^{H}\big]=\bI_{\Nt},\quad
\mathbb{E}\big[\mathbf{s}\mathbf{c}^{H}\big]=\mathbf{0}.
\label{eq:assump}
\end{equation}
The channel of user $k$ follows the Rician model $\bh_k=\sqrt{\kappa/(\kappa+1)}\,\bh_{\mathrm{LoS},k}+\sqrt{1/(\kappa+1)}\,\bh_{\mathrm{NLoS},k}$, where $\bh_{\mathrm{LoS},k}$ is the line-of-sight (LoS) component along the steering vector toward the user angle $\theta_k$, $\bh_{\mathrm{NLoS},k}\sim\CN(\mathbf{0},\bI_{\Nt})$ is the Rayleigh-scattered non-line-of-sight (NLoS) component, and $\kappa$ is the Rician factor. The received signal of user $k$ is $y_k=\bh_k^{H}\bx+n_k$ with noise $n_k\sim\CN(0,\sigma^2)$. Since the multi-user and sensing components act as interference, the SINR $\gamma_k$ of user $k$ is given by~\cite{11548574}
\begin{equation}
\gamma_k=\frac{\big|\bh_k^{H}\bw_{c,k}\big|^{2}}
{\sum_{i=1,i\neq k}^{K}\big|\bh_k^{H}\bw_{c,i}\big|^{2}
+\big\|\bh_k^{H}\mathbf{W}_s\big\|_2^{2}+\sigma^2}.
\label{eq:sinr}
\end{equation}
All beamforming designs investigated in the following sections utilize the SINR as the communication performance metric.
\subsection{Sensing Model}
\label{sec:sensing}
For sensing, the transmit array is operated as a MIMO radar system. For a uniform linear array (ULA) with element spacing $d$, the ideal steering vector is
\begin{equation}
\ba(\theta) = \left[\,1,\; e^{j\frac{2\pi d}{\lambda_0}\sin\theta},\; \dots,\;
e^{j\frac{2\pi d}{\lambda_0}(\Nt-1)\sin\theta}\,\right]^T,
\label{eq:ideal}
\end{equation}
where $\lambda_0$ is the carrier wavelength and $d=\lambda_0/2$ is assumed throughout this paper, with $\|\ba(\theta)\|_2=\sqrt{\Nt}$. With the per-user covariance $\bR_{c,k}=\bw_{c,k}\bw_{c,k}^{H}$ and the sensing covariance $\bR_s=\mathbf{W}_s\mathbf{W}_s^{H}$, the total transmit covariance under~\eqref{eq:assump} is
\begin{equation}
\bR=\mathbb{E}\big[\bx\bx^{H}\big]=\mathbf{W}_c\mathbf{W}_c^{H}
+\mathbf{W}_s\mathbf{W}_s^{H}=\sum_{k=1}^{K}\bR_{c,k}+\bR_s,
\label{eq:covsum}
\end{equation}
and the transmit beampattern is 
\begin{equation}
P(\theta)=\ba^{H}(\theta)\,\bR\,\ba(\theta).
\label{eq:bp}
\end{equation}
We minimize the beampattern matching MSE between $P(\theta)$ and a desired sensing beampattern $P_d(\theta)$ over a grid of $M$ angles $\{\theta_m\}_{m=1}^{M}$, and utilize this MSE to formulate the objectives of all beamforming designs in the following sections.
\subsection{Antenna Mutual Coupling Model}
\label{sec:mcmodel}
The ideal steering vector~\eqref{eq:ideal} assumes no EM interaction between elements, which breaks down under MC. We therefore model the MC of a patch-element ULA, widely adopted in practical ISAC transmitters, by the physics-based MC model of~\cite{balanis2016antenna}. The MC is set by two admittances: the self-admittance $Y_{\mathrm{self}}\in\mathbb{C}$ of an isolated patch, and the mutual admittance $Y_{\mathrm{mut}}(|p-q|\,d)$ between two patches whose centers are separated by $|p-q|\,d$, where $p,q\in\{1,\dots,\Nt\}$ index the array elements. These assemble into the admittance matrix $\mathbf{Y}$ of the $\Nt$-port array, relating the port currents to the port voltages, with~\cite{balanis2016antenna}
\begin{equation}
[\mathbf{Y}]_{pq} =
\begin{cases}
Y_{\mathrm{self}}, & p = q, \\
Y_{\mathrm{mut}}(|p-q|\,d), & p \neq q .
\end{cases}
\label{eq:Ymat}
\end{equation}
Inverting it yields the open-circuit impedance matrix $\mathbf{Z}_{\mathrm{oc}} = \mathbf{Y}^{-1}$, where the inversion automatically accounts for all multiple-scattering paths between elements. Each port is then terminated with the load impedance $Z_L$, giving the MC matrix of the load-terminated array as~\cite{1143128}
\begin{equation}
\bZ = (Z_{\mathrm{self}} + Z_L)\left(\mathbf{Z}_{\mathrm{oc}} + Z_L\,\mathbf{I}_{\Nt}\right)^{-1},
\label{eq:zraw}
\end{equation}
where $Z_{\mathrm{self}} = 1/Y_{\mathrm{self}}$ is the self-impedance. For given port voltages, $(\mathbf{Z}_{\mathrm{oc}}+Z_L\,\mathbf{I}_{\Nt})^{-1}$ returns the currents of the coupled array and $(Z_{\mathrm{self}}+Z_L)^{-1}\mathbf{I}_{\Nt}$ those of the uncoupled array. $\bZ$ therefore maps the uncoupled currents to the coupled ones, which are the currents that set the radiated pattern, and reduces to $\mathbf{I}_{\Nt}$ when coupling is absent ($\mathbf{Z}_{\mathrm{oc}}=Z_{\mathrm{self}}\mathbf{I}_{\Nt}$).

Finally, we normalize $\bZ$ by its average element gain $g=\|\bZ\|_F/\sqrt{\Nt}$ as $\bZb=\bZ/g$, so that the coupling alters only the pattern shape, not the average gain. Replacing the ideal steering vector~\eqref{eq:ideal} by its coupled counterpart gives
\begin{equation}
\aMC(\theta)=\bZb\,\ba(\theta),
\label{eq:amc}
\end{equation}
on which MC-compensated beamforming evaluates the beampattern~\eqref{eq:bp}. The same matrix acts on the channel~\cite{1310320}, giving
\begin{equation}
\hMC=\bZb\,\bh_k.
\label{eq:hmc}
\end{equation}
Both multiplications are linear because the coupling acts on the element currents, which set the field radiated toward every angle and along every path alike.
\section{Problem Formulation and Proposed Solution}
\label{sec:formulation}

\subsection{Problem Formulation}
\label{sec:formA}
In this section, we present a robust beamforming design under a residual MC error. We start from the original problem $(\mathcal{P}_0)$~\cite{9173030}, which minimizes the sensing beampattern matching MSE subject to the SINR constraint of every CU,
\begin{subequations}\label{eq:p0all}
\begin{align}
\min_{\mathbf{W}_c,\mathbf{W}_s}\quad
& f_r \label{eq:p0}\\
\text{s.t.}\quad & \gamma_k\ge\Gamma,\ \forall k, \label{eq:p0sinr}\\
& [\bR]_{nn}=\Pt/\Nt,\ \forall n, \label{eq:p0pwr}
\end{align}
\end{subequations}
where $\Gamma$ is the SINR target and~\eqref{eq:p0pwr} is the per-antenna power constraint. The sensing metric $f_r$ is the beampattern matching MSE given by
\begin{equation}
f_r=\frac{1}{M}\sum_{m=1}^{M}\big(P_d(\theta_m)-\ba^{H}(\theta_m)\bR\,\ba(\theta_m)\big)^2,
\label{eq:fr}
\end{equation}
where $P_d(\theta)$ is a fixed desired pattern.

Problem $(\mathcal{P}_0)$ is non-convex due to the quadratic SINR constraint~\eqref{eq:p0sinr}, and is relaxed to a convex problem by semidefinite relaxation (SDR) over the covariances of~\eqref{eq:covsum}, under which~\eqref{eq:p0sinr} becomes $\big(1+\tfrac{1}{\Gamma}\big)\bh_k^{H}\bR_{c,k}\bh_k\ge\bh_k^{H}\bR\,\bh_k+\sigma^{2}$. The relaxed MC-uncompensated design $(\mathcal{P}_1)$ is then formulated as\begin{subequations}\label{eq:p1all}
\begin{align}
\min_{\bR,\{\bR_{c,k}\}}\quad
& \frac{1}{M}\sum_{m=1}^{M}\Big(P_d(\theta_m)
-\ba^{H}(\theta_m)\bR\,\ba(\theta_m)\Big)^{2}
\label{eq:p1}\\
\text{s.t.}\quad
& \big(1+\tfrac{1}{\Gamma}\big)\bh_k^{H}\bR_{c,k}\bh_k
\ge\bh_k^{H}\bR\bh_k+\sigma^2,\ \forall k, \label{eq:sinrcon}\\
& [\bR]_{nn}=\Pt/\Nt,\ \forall n, \label{eq:pwrcon}\\
& \bR-\textstyle\sum_{k}\bR_{c,k}\succeq\mathbf{0},\
\bR_{c,k}\succeq\mathbf{0},\ \forall k, \label{eq:psdcon}
\end{align}
\end{subequations}
which is convex. Here, $\bR-\sum_{k}\bR_{c,k}\succeq\mathbf{0}$ preserves the covariance structure~\eqref{eq:covsum}, ensuring a valid sensing covariance $\bR_s\succeq\mathbf{0}$. The relaxation is tight: since the objective and~\eqref{eq:pwrcon} depend only on $\bR$, and $\bR_{c,k}$ enters~\eqref{eq:sinrcon} only through $\bh_k^{H}\bR_{c,k}\bh_k$, rank-one beamformers attaining the same objective can be constructed from the solution of~$(\mathcal{P}_1)$~\cite{9124713}.

The MC-compensated design $(\mathcal{P}_2)$ has the same form as~\eqref{eq:p1all}, with the ideal $\ba(\theta)$ and $\bh_k$ replaced throughout by the coupled $\aMC(\theta)$ of~\eqref{eq:amc} and $\hMC$ of~\eqref{eq:hmc}, so that both the beampattern and the channel are evaluated on the coupled array. It is noted that both $(\mathcal{P}_1)$ and $(\mathcal{P}_2)$ are non-robust designs, as they evaluate the beampattern and the channel at a fixed array response and offer no guarantee against the residual uncertainty left by the MC model.

Therefore, we propose a robust MC-compensated design that places the residual error on the MC model, via the norm-bounded error model of~\cite{5765479},
\begin{equation}
\bZrob=\bZb+\bDel,\qquad \|\bDel\|_2\le\varepsilon,
\label{eq:zrob-def}
\end{equation}
where $\bDel\in\mathbb{C}^{\Nt\times\Nt}$ is the residual coupling matrix, $\varepsilon>0$ is the design error bound, and $\|\cdot\|_2$ denotes the spectral norm.
Here, $\varepsilon$ bounds the spectral norm of the residual error, expressed on a decibel scale as $\eta=10\log_{10}\varepsilon^{2}$.

\subsubsection{Radar beampattern uncertainty}
For the radar sensing uncertainty, replacing $\bZb$ in~\eqref{eq:amc} by
$\bZrob$ yields the steering vector under the residual MC error,
\begin{equation}
\arob(\theta_m)=\bZrob\,\ba(\theta_m)
=\aMC(\theta_m)+\bDel\,\ba(\theta_m).
\label{eq:arob-def}
\end{equation}
Writing its residual as $\mathbf{e}_{s,m}=\bDel\,\ba(\theta_m)$, the norm
bound in~\eqref{eq:zrob-def} together with $\|\ba(\theta_m)\|_2=\sqrt{\Nt}$
from~\eqref{eq:ideal} gives the sensing error bound $\epss$,
\begin{equation}
\|\mathbf{e}_{s,m}\|_2\le\|\bDel\|_2\,\|\ba(\theta_m)\|_2
\le\sqrt{\Nt}\,\varepsilon=\epss,\qquad\forall m.
\label{eq:e-bound}
\end{equation}
Replacing $\ba(\theta_m)$ by $\arob(\theta_m)$ in the objective~\eqref{eq:p1}
therefore makes the matching MSE a function of the residual.
\subsubsection{Communication channel uncertainty}
For the communication channel uncertainty, replacing $\bZb$ in~\eqref{eq:hmc} by
$\bZrob$ gives the channel of user $k$ under the residual error,
\begin{equation}
\hrob=\bZrob\,\bh_k=\hMC+\bDel\,\bh_k.
\label{eq:hrob-def}
\end{equation}
Writing its residual as $\mathbf{e}_{c,k}=\bDel\,\bh_k$, the same norm bound gives the per-user error bound $\epsc$,
\begin{equation}
\|\mathbf{e}_{c,k}\|_2\le\|\bDel\|_2\big\|\bh_k\big\|_2
\le\varepsilon\big\|\bh_k\big\|_2=\epsc,\qquad\forall k.
\label{eq:ek-bound}
\end{equation}
The bound $\epsc$ varies from user to user through $\|\bh_k\|_2$, and is computable at the transmitter from $\varepsilon$ and the available CSI.
Replacing $\bh_k$ by $\hrob$ in the SINR constraint~\eqref{eq:sinrcon} makes it depend on the residual error.

The proposed robust design $(\mathcal{P}_3)$ minimizes the worst-case sensing beampattern matching MSE subject to $\gamma_k\ge\Gamma$ for every communication user and any residual within the error bound. Substituting $\arob(\theta_m)$ of~\eqref{eq:arob-def} for $\ba(\theta_m)$ in~\eqref{eq:fr} gives the matching MSE under the residual error, $f_{r,\mathrm{rob}}=\frac{1}{M}\sum_{m=1}^{M}\big(P_d(\theta_m)-\arob^{H}(\theta_m)\bR\,\arob(\theta_m)\big)^{2}$, and $(\mathcal{P}_3)$ is formulated as
\begin{subequations}\label{eq:p3}
\begin{align}
\min_{\bR,\{\bR_{c,k}\}}\;\max_{\|\bDel\|_2\le\varepsilon}\;
& f_{r,\mathrm{rob}}
\label{eq:p3obj}\\
\text{s.t.}\quad
& \big(1+\tfrac{1}{\Gamma}\big)\hrob^{H}\bR_{c,k}\hrob\nonumber\\
&\quad\ge\hrob^{H}\bR\,\hrob+\sigma^2,\ \forall k,
\label{eq:p3sinr}\\
& \eqref{eq:pwrcon},\eqref{eq:psdcon}, \notag
\end{align}
\end{subequations}
where the common residual error $\bDel$ enters the objective through $\mathbf{e}_{s,m}$ and the SINR constraint through $\mathbf{e}_{c,k}$.

\subsection{Proposed Solution for Robust Design $(\mathcal{P}_3)$}
\label{sec:formB}
In $(\mathcal{P}_3)$, the objective~\eqref{eq:p3obj} carries a maximization over the residual $\bDel$ with $\|\bDel\|_2\le\varepsilon$, and the SINR constraint~\eqref{eq:p3sinr} is semi-infinite over the same set. Therefore, we develop a convex SDP $(\mathcal{P}_3')$ that guarantees the worst-case performance of $(\mathcal{P}_3)$ as follows.

To turn the min--max objective~\eqref{eq:p3obj} into a minimization, we introduce the auxiliary variable $S_m$ to upper-bound the worst-case matching error,
\begin{equation}
S_m\ \ge\ \max_{\|\mathbf{e}_{s,m}\|_2\le\epss}
\big|\,P_d(\theta_m)-\arob^{H}(\theta_m)\bR\,\arob(\theta_m)\big|.
\label{eq:tbound}
\end{equation}
Then the min--max objective of $(\mathcal{P}_3)$ is replaced by minimizing $\tfrac{1}{M}\sum_{m}S_m^{2}$ subject to~\eqref{eq:tbound}.
Writing the absolute value in~\eqref{eq:tbound} as a two-sided bound and substituting $\arob(\theta_m)=\aMC(\theta_m)+\mathbf{e}_{s,m}$ from~\eqref{eq:arob-def}
then yields the two constraints
\begin{subequations}\label{eq:signsplit}
\begin{multline}
\big(\aMC(\theta_m)+\mathbf{e}_{s,m}\big)^{H}\bR
\big(\aMC(\theta_m)+\mathbf{e}_{s,m}\big)\\
\ge P_d(\theta_m)-S_m,
\label{eq:undershoot}
\end{multline}
\begin{multline}
\big(\aMC(\theta_m)+\mathbf{e}_{s,m}\big)^{H}(-\bR)
\big(\aMC(\theta_m)+\mathbf{e}_{s,m}\big)\\
\ge -P_d(\theta_m)-S_m,
\label{eq:overshoot}
\end{multline}
\end{subequations}
whose sum yields $S_m\ge0$, so the bound is automatically nonnegative.

For the SINR constraint~\eqref{eq:p3sinr}, since the residual enters only through the channel, we collect its terms into a quadratic form in the channel and substitute $\hrob=\hMC+\mathbf{e}_{c,k}$ from~\eqref{eq:hrob-def}, giving
\begin{equation}
\big(\hMC+\mathbf{e}_{c,k}\big)^{H}
\Big[\big(1+\tfrac{1}{\Gamma}\big)\bR_{c,k}-\bR\Big]
\big(\hMC+\mathbf{e}_{c,k}\big)\ \ge\ \sigma^{2},
\label{eq:cmargin}
\end{equation}
required for every $\|\mathbf{e}_{c,k}\|_2\le\epsc$.

The beampattern constraints~\eqref{eq:signsplit} and the SINR constraint~\eqref{eq:cmargin} remain semi-infinite, but now share one common algebraic form,
\begin{equation}
(\mathbf{x}+\mathbf{e})^{H}\bQ(\mathbf{x}+\mathbf{e})\ \ge\ \beta,
\qquad\forall\,\|\mathbf{e}\|_2\le r.
\label{eq:template}
\end{equation}
Here, $\mathbf{x}$ is the nominal vector, $\bQ$ is a Hermitian matrix, $\beta$ is the required level, and $r$ is the error bound of the uncertainty set. The beampattern constraints~\eqref{eq:undershoot} and~\eqref{eq:overshoot} take this form with $\mathbf{x}=\aMC(\theta_m)$, $\bQ=\pm\bR$, $\beta=\pm P_d(\theta_m)-S_m$ and $r=\epss$, and the SINR constraint~\eqref{eq:cmargin} with $\mathbf{x}=\hMC$,
$\bQ=\big(1+\tfrac{1}{\Gamma}\big)\bR_{c,k}-\bR$, $\beta=\sigma^{2}$ and
$r=\epsc$, where $\epsc>0$ for any nonzero $\bh_k$. In both cases, the constraint and the norm bound are quadratic in $\mathbf{e}_{s,m}$ and $\mathbf{e}_{c,k}$, respectively, and the following lemma converts this pair of quadratic conditions into a linear matrix inequality (LMI).

\begin{lemma}\label{lem:sproc}
Let $\bQ=\bQ^{H}$ and $r>0$. Then~\eqref{eq:template} holds if and only if there exists $\lambda\ge0$ such that
\begin{equation}
\mathbf{L}(\bQ,\mathbf{x},\beta,r,\lambda)=
\begin{bmatrix}
\bQ+\lambda\bI_{\Nt} & \bQ\mathbf{x}\\
\mathbf{x}^{H}\bQ &
\mathbf{x}^{H}\bQ\mathbf{x}-\beta-\lambda r^{2}
\end{bmatrix}\succeq\mathbf{0},
\label{eq:Ltemplate}
\end{equation}
an LMI in $\lambda$ and the entries of $(\bQ,\beta)$ for fixed $\mathbf{x}$.
\end{lemma}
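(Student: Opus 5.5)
The plan is to recast \eqref{eq:template} as an implication between two quadratic inequalities in $\mathbf{e}\in\mathbb{C}^{\Nt}$ and then invoke the S-lemma. Expanding, \eqref{eq:template} says that
\begin{equation*}
f_0(\mathbf{e}) \triangleq \mathbf{e}^{H}\bQ\mathbf{e}+2\,\Rea\{\mathbf{x}^{H}\bQ\mathbf{e}\}+\mathbf{x}^{H}\bQ\mathbf{x}-\beta\ \ge\ 0
\end{equation*}
must hold whenever $f_1(\mathbf{e})\triangleq r^{2}-\mathbf{e}^{H}\mathbf{e}\ge 0$. Both $f_0$ and $f_1$ are real-valued Hermitian quadratic functions of $\mathbf{e}$. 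Since $r>0$, the point $\mathbf{e}=\mathbf{0}$ gives $f_1(\mathbf{0})=r^{2}>0$, so Slater's condition holds.

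With this in place, the (complex) S-lemma applies: the implication $f_1\ge0\Rightarrow f_0\ge0$ holds if and only if there is $\lambda\ge0$ with $f_0(\mathbf{e})-\lambda f_1(\mathbf{e})\ge0$ for all $\mathbf{e}$. To turn this into a matrix condition, write $f_0-\lambda f_1$ as $[\mathbf{e}^{H},1]\,\mathbf{M}\,[\mathbf{e}^{H},1]^{H}$ with
\begin{equation*}
\mathbf{M}=\begin{bmatrix}\bQ+\lambda\bI_{\Nt} & \bQ\mathbf{x}\\ \mathbf{x}^{H}\bQ & \mathbf{x}^{H}\bQ\mathbf{x}-\beta-\lambda r^{2}\end{bmatrix},
\end{equation*}
which is exactly $\mathbf{L}$ in \eqref{eq:Ltemplate}.

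Next comes the step that nonnegativity of this quadratic form for all $\mathbf{e}$ is equivalent to $\mathbf{M}\succeq\mathbf{0}$.
\begin{itemize}
\item One direction is immediate: if $\mathbf{M}\succeq\mathbf{0}$, the form is nonnegative at every vector, in particular at those with last coordinate $1$.
\item For the converse, take an arbitrary $[\mathbf{v}^{H},t]^{H}$. If $t\neq0$, rescale by $1/t$ and set $\mathbf{e}=\mathbf{v}/t$. If $t=0$, view $\mathbf{v}$ as the limit of $\mathbf{e}/s$ with $\mathbf{e}=s\mathbf{v}$ and $s\to\infty$, which gives $\mathbf{v}^{H}(\bQ+\lambda\bI)\mathbf{v}\ge0$.
\end{itemize}
Linearity of $\mathbf{L}$ in $(\lambda,\bQ,\beta)$ for fixed $\mathbf{x}$ is then read off directly from the entries.

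The main obstacle is justifying the "only if" direction, i.e.\ the lossless S-lemma in the complex setting. I would cite the standard S-procedure (Yakubovich, or Boyd--Vandenberghe Appendix B) for a single constraint with Slater point. For the complex case there are two routes: separate $\mathbf{e}$ into real and imaginary parts to get a real S-lemma in $\mathbb{R}^{2\Nt}$, or use the fact that the complex S-lemma is lossless even with two constraints. The sufficiency direction I would also verify directly, since it is elementary: for $\|\mathbf{e}\|_2\le r$,
\begin{equation*}
f_0(\mathbf{e})\ \ge\ f_0(\mathbf{e})-\lambda f_1(\mathbf{e})\ \ge\ 0,
\end{equation*}
because $\lambda f_1(\mathbf{e})\ge0$.
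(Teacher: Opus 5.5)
Your proposal is correct and follows essentially the same route as the paper: you expand $(\mathbf{x}+\mathbf{e})^{H}\bQ(\mathbf{x}+\mathbf{e})-\beta$ into $f_0$, use $\mathbf{e}=\mathbf{0}$ as the strictly feasible point of $f_1=r^{2}-\mathbf{e}^{H}\mathbf{e}$, invoke the lossless complex S-procedure, and homogenize $f_0-\lambda f_1$ into the quadratic form of $\mathbf{L}$. The only difference is that you also write out the homogenization step and the easy sufficiency direction, which the paper leaves to citation; in the $t=0$ case, the cleaner statement is to evaluate at $\mathbf{e}=s\mathbf{v}$, divide by $s^{2}$, and let $s\to\infty$.
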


\begin{IEEEproof}
Expanding the left side of~\eqref{eq:template} gives the quadratic function $f_0(\mathbf{e})=\mathbf{e}^{H}\bQ\mathbf{e}+2\Rea\big((\bQ\mathbf{x})^{H}\mathbf{e}\big)+\mathbf{x}^{H}\bQ\mathbf{x}-\beta$, and the norm bound is $f_1(\mathbf{e})=r^{2}-\mathbf{e}^{H}\mathbf{e}\ge0$. Since $f_1(\mathbf{0})=r^{2}>0$, the S-procedure, lossless for complex-valued quadratics~\cite{beck2006strong}, states that $f_0\ge0$ on $\{f_1\ge0\}$ if and only if there exists $\lambda\ge0$ with $f_0(\mathbf{e})-\lambda f_1(\mathbf{e})\ge0$ for all $\mathbf{e}$. Writing this difference as $[\mathbf{e}^{H},1]\,\mathbf{L}(\bQ,\mathbf{x},\beta,r,\lambda)\,[\mathbf{e}^{T},1]^{T}$ shows, by the standard homogenization argument, that its nonnegativity for all $\mathbf{e}$ is equivalent to~\eqref{eq:Ltemplate}.
\end{IEEEproof}

\begin{figure*}[!t]
\centering
\subfloat[]{\includegraphics[width=0.5\columnwidth]{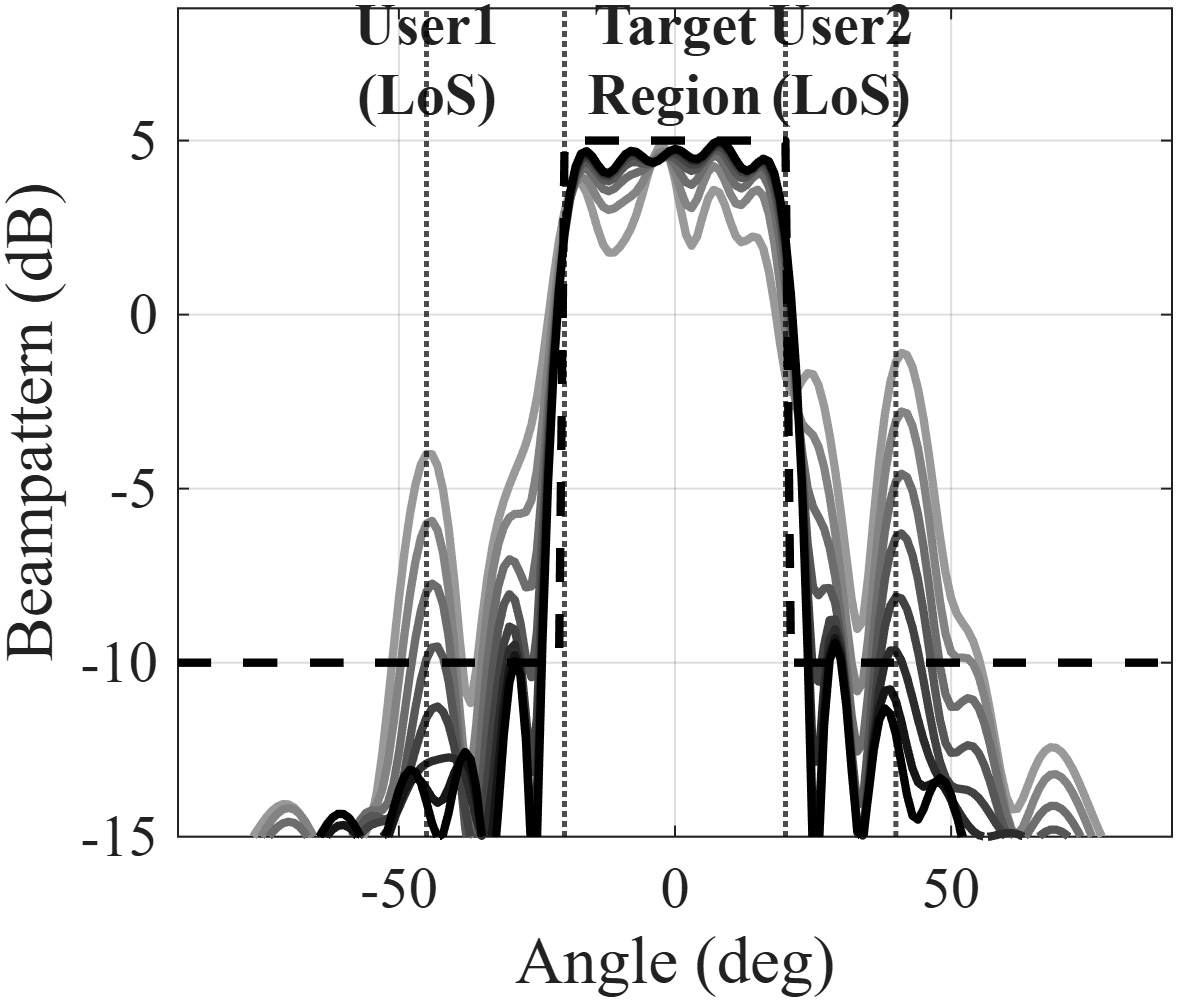}\label{fig:bp_a}}
\hfil
\subfloat[]{\includegraphics[width=0.5\columnwidth]{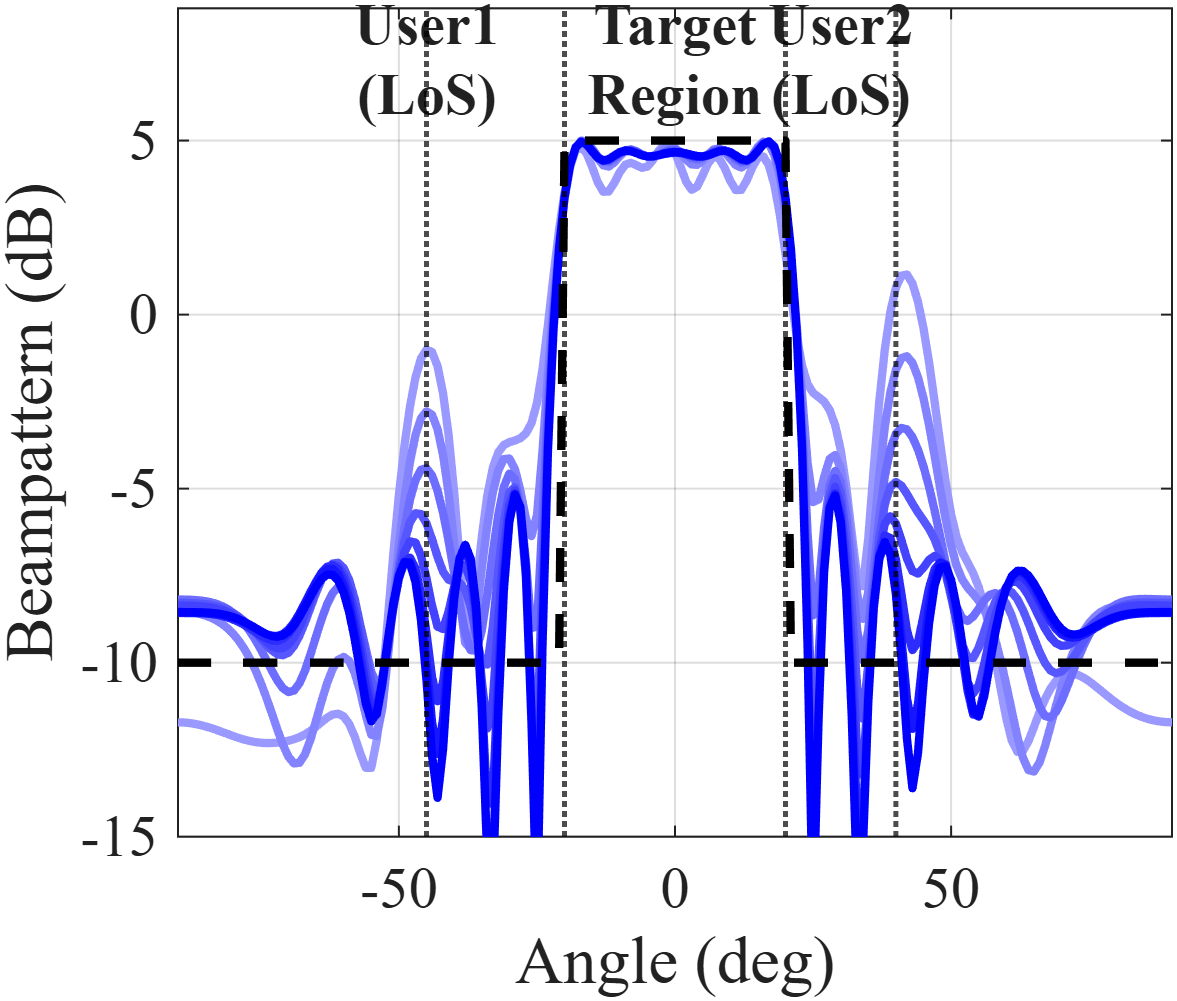}\label{fig:bp_b}}
\hfil
\subfloat[]{\includegraphics[width=0.5\columnwidth]{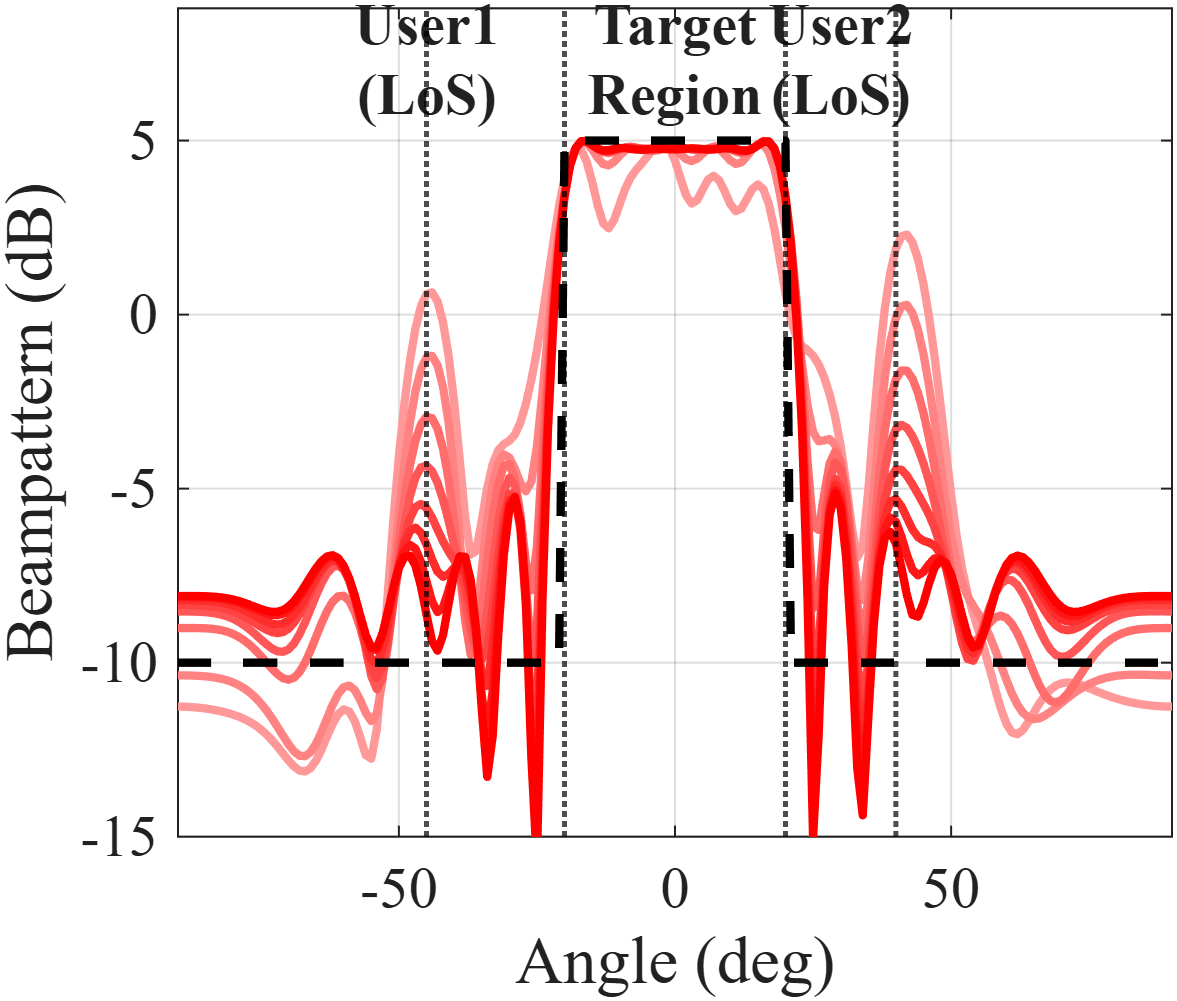}\label{fig:bp_c}}
\hfil
\includegraphics[width=0.2\columnwidth]{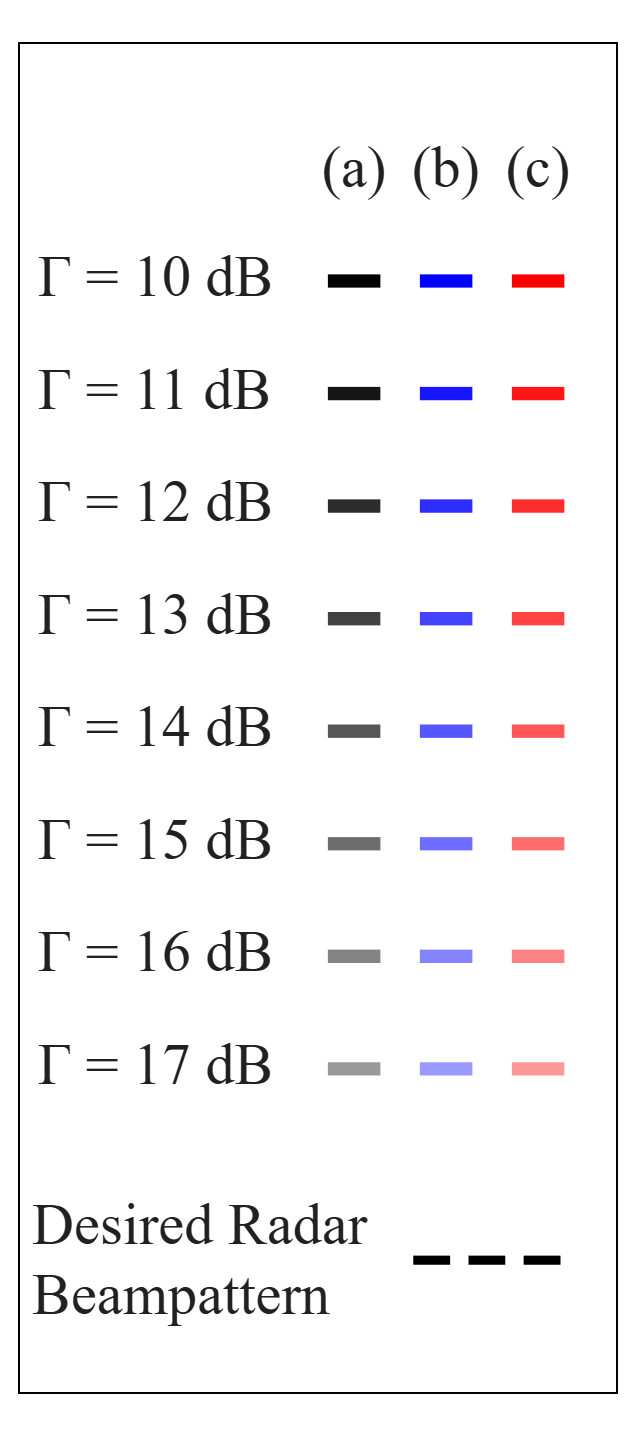}
\caption{Beampatterns of the (a)~MC-uncompensated, (b)~MC-compensated, and
(c)~robust MC-compensated designs.}
\label{fig:bp}
\end{figure*}

Applying Lemma~\ref{lem:sproc} to the beampattern constraints~\eqref{eq:undershoot} and~\eqref{eq:overshoot} with multipliers $\nu_m\ge0$ and $\mu_m\ge0$ gives
\begin{align}
\bPhi_m&=\mathbf{L}\big(\bR,\aMC(\theta_m),P_d(\theta_m)-S_m,\epss,\nu_m\big)
\succeq\mathbf{0},
\label{eq:lmi-under}\\
\bPsi_m&=\mathbf{L}\big(-\bR,\aMC(\theta_m),-P_d(\theta_m)-S_m,\epss,\mu_m\big)
\succeq\mathbf{0},
\label{eq:lmi-over}
\end{align}
and to the SINR constraint~\eqref{eq:cmargin} with multiplier $\lambda_k\ge0$ gives
\begin{equation}
\bXi_k=\mathbf{L}\Big(\big(1+\tfrac{1}{\Gamma}\big)\bR_{c,k}-\bR,
\hMC,\sigma^{2},\epsc,\lambda_k\Big)\succeq\mathbf{0}.
\label{eq:lmi-comm}
\end{equation}

Since $\bPhi_m$ and $\bPsi_m$ are linear in $(\bR,S_m,\nu_m,\mu_m)$ and $\bXi_k$ is linear in $(\bR,\bR_{c,k},\lambda_k)$, replacing the beampattern constraints~\eqref{eq:signsplit} and the SINR constraint~\eqref{eq:cmargin} by~\eqref{eq:lmi-under}--\eqref{eq:lmi-comm} yields the convex SDP $(\mathcal{P}_3')$. With the optimization variables collected as $\mathcal{V}=\big\{\bR,\{\bR_{c,k}\},\{S_m\},\{\nu_m\},\{\mu_m\},\{\lambda_k\}\big\}$, the proposed robust design $(\mathcal{P}_3')$ is formulated as
\begin{equation}\label{eq:p3prime}
\begin{aligned}
\min_{\mathcal{V}}\quad
& \frac{1}{M}\sum_{m=1}^{M}S_m^{2}\\
\text{s.t.}\quad
& \bPhi_m\succeq\mathbf{0},\ \bPsi_m\succeq\mathbf{0},\ \nu_m\ge0,\ \mu_m\ge0,\ \forall m,\\
& \bXi_k\succeq\mathbf{0},\ \lambda_k\ge0,\ \forall k,\ \eqref{eq:pwrcon},\eqref{eq:psdcon}.
\end{aligned}
\end{equation}

Coupling-agnostic robust designs~\cite{10153696,11072251,10666854,10056405} center the uncertainty at the ideal $\ba(\theta_m)$, so their error bound must absorb the full coupling deviation and cannot be smaller than $\|(\bZb-\bI_{\Nt})\ba(\theta_m)\|_2+\epss$, whereas the proposed design centers it at the physics-based $\aMC(\theta_m)$ and covers only the residual $\epss$, yielding a far less conservative design.

The LMI $\bXi_k$ depends on the full matrix $\bR_{c,k}$, not only on the scalar $\bh_k^{H}\bR_{c,k}\bh_k$ as in $(\mathcal{P}_1)$, so the optimal $\bR_{c,k}$ of $(\mathcal{P}_3')$ may have rank greater than one. The beamformers $\{\bw_{c,k}\}$ can then be obtained by solving $(\mathcal{P}_3')$ with $\bR_{c,k}=p_k\mathbf{u}_k\mathbf{u}_k^{H}$, where $\mathbf{u}_k$ is the principal eigenvector of the optimal $\bR_{c,k}$ and $p_k\ge0$ is the per-user power to be optimized. This problem remains convex, and its constraint $\bXi_k\succeq\mathbf{0}$ guarantees via Lemma~\ref{lem:sproc} that the resulting $\bw_{c,k}=\sqrt{p_k}\,\mathbf{u}_k$ satisfies $\gamma_k\ge\Gamma$ for all $\|\bDel\|_2\le\varepsilon$.

\subsection{Computational Complexity}
\label{sec:complexity}
Problems $(\mathcal{P}_1)$, $(\mathcal{P}_2)$, and $(\mathcal{P}_3')$ are SDPs solved by an interior-point method, whose cost is governed by the number of variables and the sizes of the semidefinite blocks. The former mainly sets the cost of each iteration, the latter the number of iterations. The non-robust designs $(\mathcal{P}_1)$ and $(\mathcal{P}_2)$ optimize the $K{+}1$ covariance matrices $\bR,\{\bR_{c,k}\}\in\mathbb{H}^{\Nt}$ and $M$ residuals under $K{+}1$ blocks of size $\Nt$, giving the per-solve complexity $\mathcal{O}\big(\sqrt{(K{+}1)\Nt}\,\big((K{+}1)\Nt^{2}+M\big)^{3} \log(1/\epsilon)\big)$ for accuracy $\epsilon$. The robust design $(\mathcal{P}_3')$ adds only $3M{+}K$ scalar variables, so each iteration costs the same order, but its worst-case bounds impose one pair $\bPhi_m,\bPsi_m$ per angle and one $\bXi_k$ per user, adding $2M{+}K$ blocks of size $\Nt{+}1$ that raise the iteration count to $\mathcal{O}\big(\sqrt{M\Nt}\big)$. The complexity therefore remains polynomial of the same order in $\Nt$, higher only by the factor $\sqrt{M/(K{+}1)}$, the price of enforcing the worst-case bounds at every angle and user.

\section{Numerical Results}
\label{sec:sim}
We consider an $\Nt=16$-element patch array with $d=\lambda_0/2$. Each rectangular patch ($W=0.49\lambda_0$, $L=0.44\lambda_0$) is terminated with $Z_L=50\,\Omega$, and $Y_{\mathrm{self}}$, $Y_{\mathrm{mut}}$ follow the physics-based MC model~\cite{balanis2016antenna}, with adjacent radiating slots separated by $0.06\lambda_0$. The array serves $K=2$ users at $-45^\circ$ and $40^\circ$ over Rician channels ($\kappa=1$). The desired pattern $P_d(\theta)$ is $5$~dB over
$[-20^\circ,20^\circ]$ with a $-10$~dB floor elsewhere, on a uniform grid of $M=91$ angles. We set the normalized transmit power $\Pt=1$ under the per-antenna constraint~\eqref{eq:pwrcon}. We use $\Pt/\sigma^2=10$~dB and $\eta=-30$~dB. Unless stated otherwise, the realized mismatch level is $\tilde{\eta}=-12$~dB.
Each realization of $\bDel$ has independent complex-Gaussian entries scaled to $\|\bDel\|_2=\tilde{\varepsilon}$, where $\tilde{\eta}=10\log_{10}\tilde{\varepsilon}^{2}$; the same realization perturbs both the steering vectors in~\eqref{eq:arob-def} and the channels.
All designs are evaluated on the optimal covariances $\{\bR_{c,k},\bR\}$. Over $1{,}000$ Monte Carlo realizations we report the mean beampattern, the $95$th-percentile matching MSE, and the $5$th-percentile achieved SINR, where the tail percentiles reflect the worst-case-oriented nature of the proposed design. The realized mismatch $\tilde{\eta}$ is set beyond the design bound $\eta$ to evaluate the robustness of the proposed design.
\begin{figure}[!t]
\centering
\subfloat[]{\includegraphics[width=0.5\columnwidth]{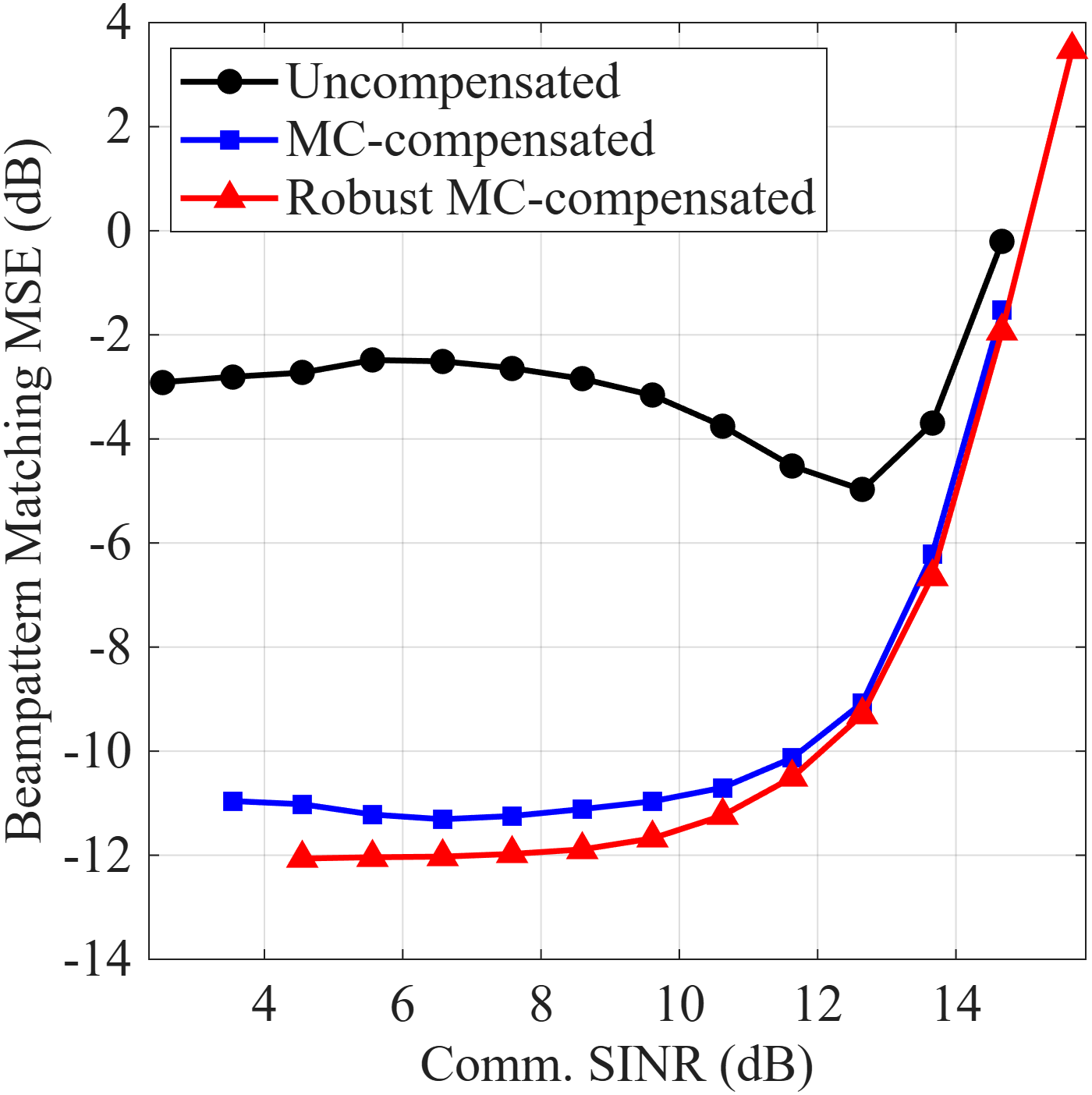}\label{fig:tradeoff_a}}
\hfil
\subfloat[]{\includegraphics[width=0.5\columnwidth]{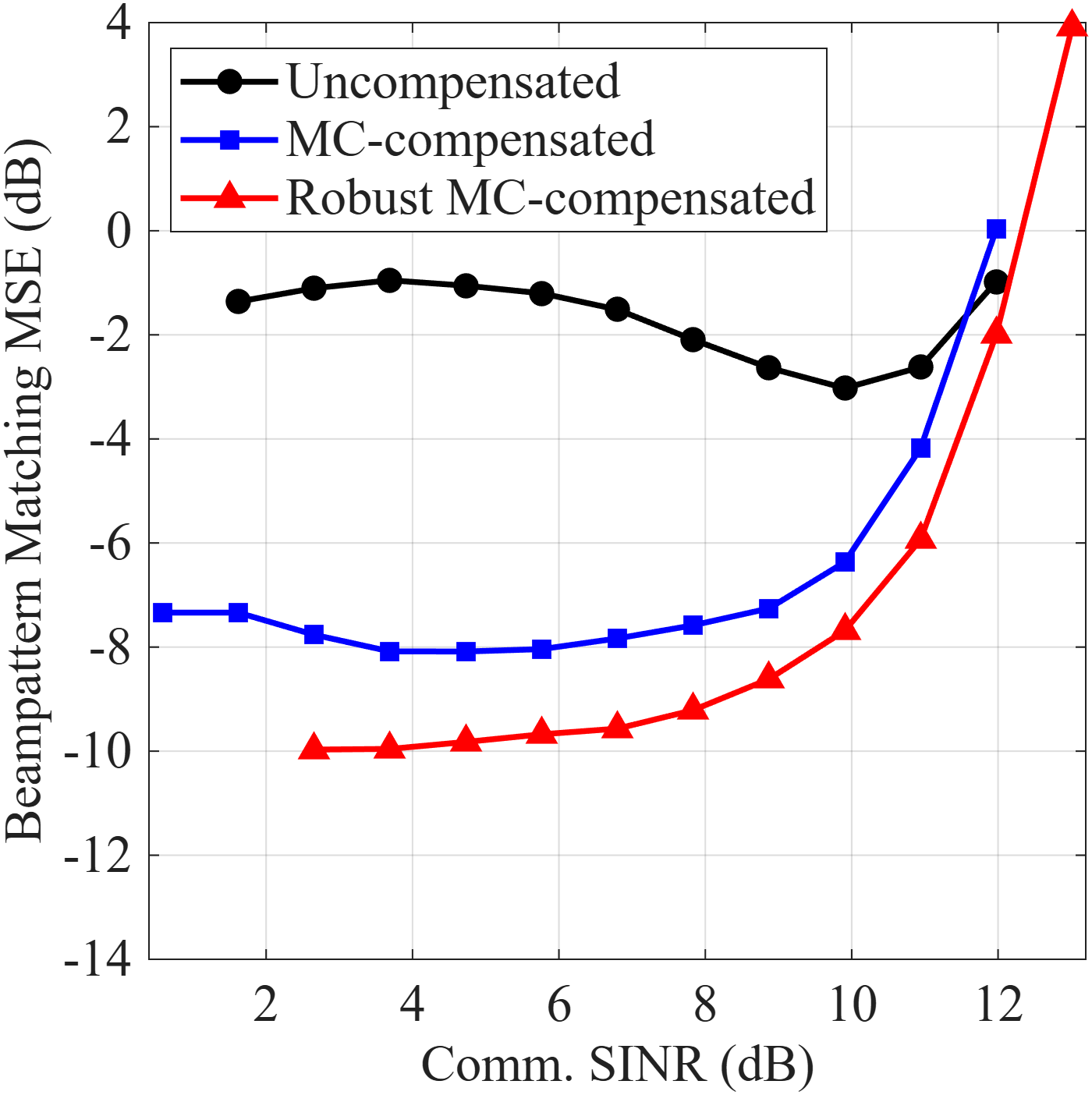}\label{fig:tradeoff_b}}
\caption{Beampattern matching MSE versus the achieved SINR trade-off at (a)~$\tilde{\eta}=-12$~dB and
(b)~$\tilde{\eta}=-6$~dB.}
\label{fig:tradeoff}
\end{figure}

\begin{figure}[!t]
\centering
\includegraphics[width=\columnwidth]{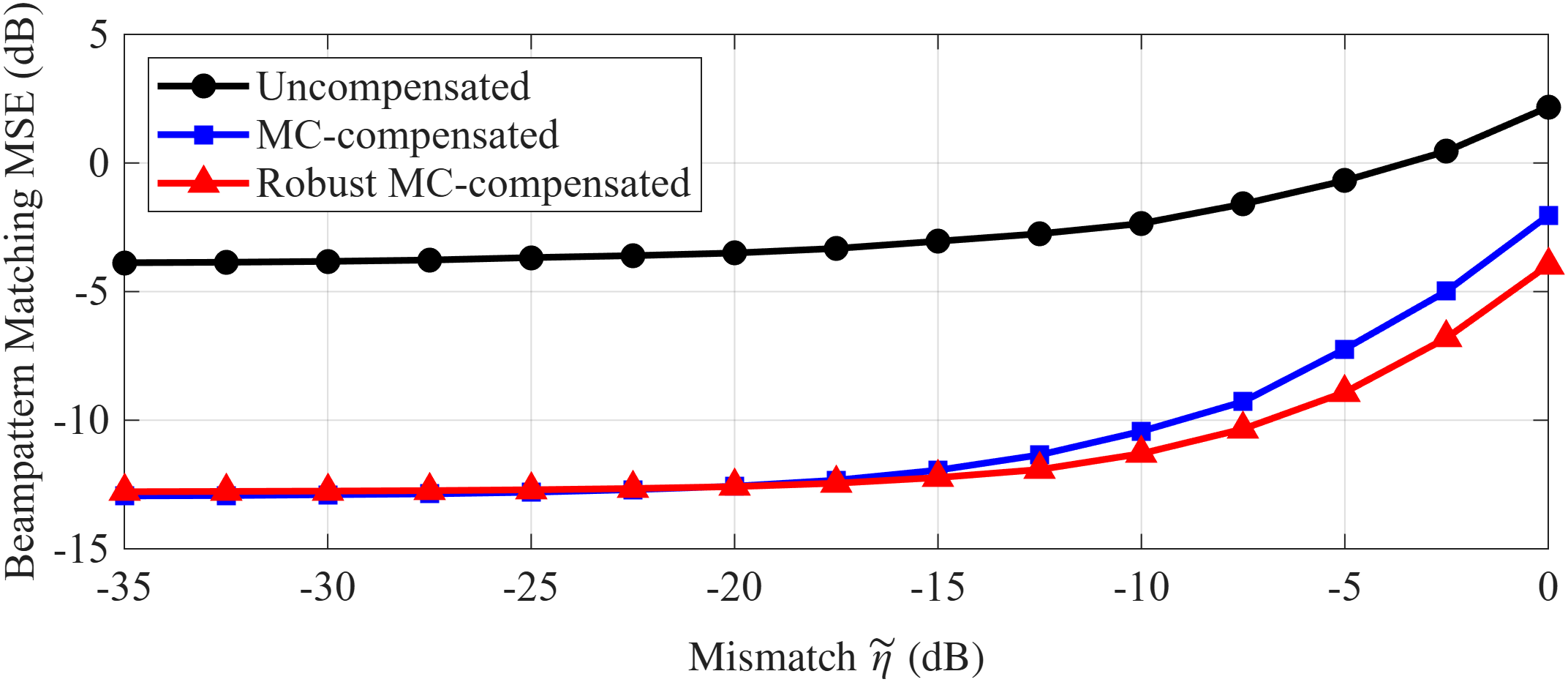}
\caption{Beampattern matching MSE versus the realized mismatch $\tilde{\eta}$.}
\label{fig:mse}
\end{figure}

\begin{figure}[!t]
\centering
\includegraphics[width=\columnwidth]{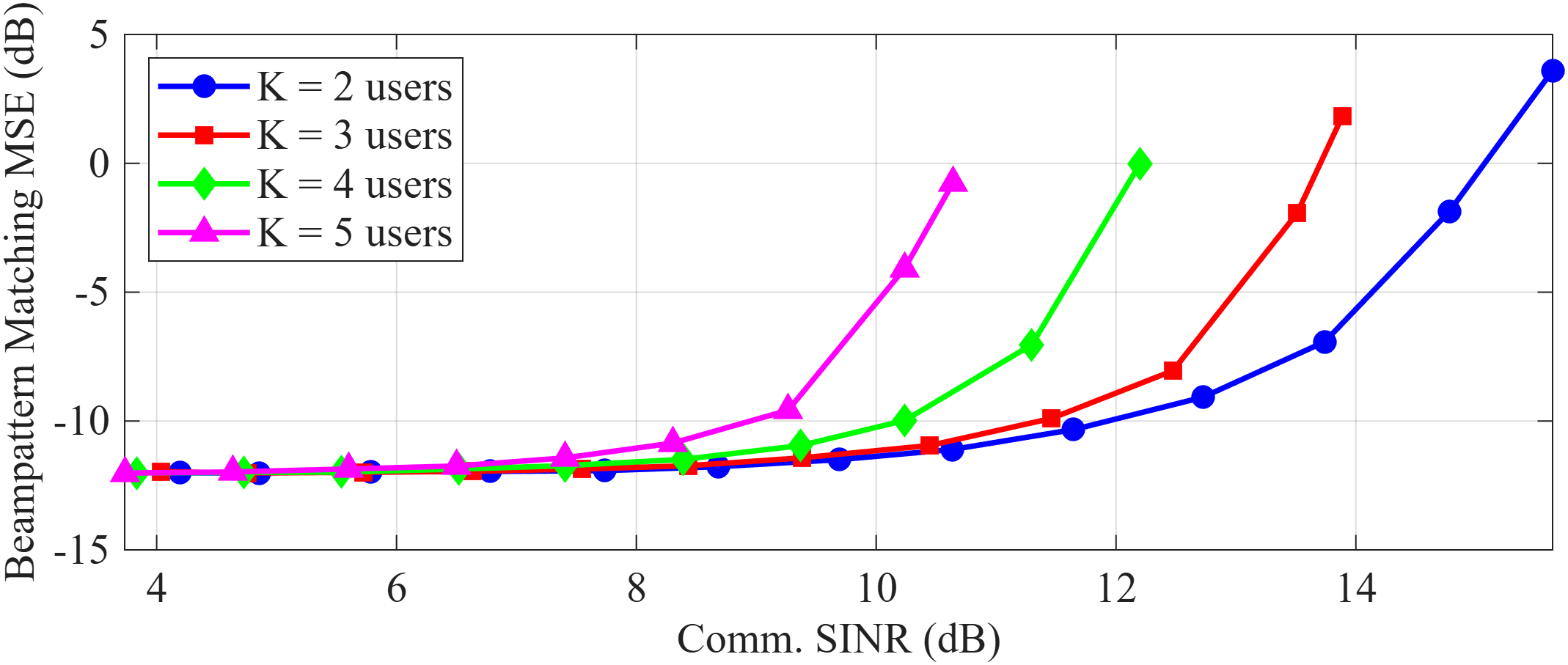}
\caption{Trade-off of the robust MC-compensated design for $K=2$--$5$ users.}
\label{fig:multiuser}
\end{figure}

Fig.~\ref{fig:bp} shows the beampatterns as $\Gamma$ is swept from $10$ to $17$~dB. The uncompensated design is distorted over the target region and forms the weakest beams toward the users, whereas the robust design matches the target region closely and forms the strongest user beams.
Fig.~\ref{fig:tradeoff} shows the trade-off between the beampattern matching MSE and the achieved SINR at $\tilde{\eta}=-12$~dB and $-6$~dB, both beyond the design bound. At equal achieved SINR, the robust design attains a lower MSE, and its margin widens with the mismatch.
Fig.~\ref{fig:mse} plots the beampattern matching MSE versus the mismatch $\tilde{\eta}$ at a fixed target $\Gamma=10$~dB. Near the design bound the two compensated designs perform alike, but as the mismatch grows the robust design outperforms the MC-compensated one.
Fig.~\ref{fig:multiuser} shows the trade-off of the robust design for $K=2$--$5$ users, where users are added at $-60^\circ$, $55^\circ$, and $75^\circ$ to the two at $-45^\circ$ and $40^\circ$. Up to moderate SINR, the MSE is unaffected by $K$; beyond that it rises earlier for larger $K$.

\section{Conclusion}
In this paper, we have proposed robust MC-compensated beamforming for MU-MIMO ISAC, which models the residual uncertainty of the MC model as a norm-bounded error on the MC matrix. Since the residual error distorts both the sensing beampattern and the communication channel, we have cast both worst-case constraints in one common algebraic form, converted them into LMIs, and developed a convex SDP of polynomial complexity. Numerical results confirm that the worst-case guarantees of the proposed design yield a lower matching MSE and a higher achieved SINR than the conventional designs in the presence of the residual MC error.


\begin{thebibliography}{10}
\providecommand{\url}[1]{#1}
\csname url@samestyle\endcsname
\providecommand{\newblock}{\relax}
\providecommand{\bibinfo}[2]{#2}
\providecommand{\BIBentrySTDinterwordspacing}{\spaceskip=0pt\relax}
\providecommand{\BIBentryALTinterwordstretchfactor}{4}
\providecommand{\BIBentryALTinterwordspacing}{\spaceskip=\fontdimen2\font plus
\BIBentryALTinterwordstretchfactor\fontdimen3\font minus \fontdimen4\font\relax}
\providecommand{\BIBforeignlanguage}[2]{{%
\expandafter\ifx\csname l@#1\endcsname\relax
\typeout{** WARNING: IEEEtran.bst: No hyphenation pattern has been}%
\typeout{** loaded for the language `#1'. Using the pattern for}%
\typeout{** the default language instead.}%
\else
\language=\csname l@#1\endcsname
\fi
#2}}
\providecommand{\BIBdecl}{\relax}
\BIBdecl

\bibitem{9737357}
F.~Liu, Y.~Cui, C.~Masouros, J.~Xu, T.~X. Han, Y.~C. Eldar, and S.~Buzzi, ``{Integrated Sensing and Communications: Toward Dual-Functional Wireless Networks for {6G} and Beyond},'' \emph{IEEE Journal on Selected Areas in Communications}, vol.~40, no.~6, pp. 1728--1767, 2022.

\bibitem{11184506}
K.~Han, K.~Meng, X.-Y. Wang, and C.~Masouros, ``{Network-Level ISAC Design: State-of-the-Art, Challenges, and Opportunities},'' \emph{IEEE Journal of Selected Topics in Electromagnetics, Antennas and Propagation}, vol.~1, no.~1, pp. 65--83, 2025.

\bibitem{8288677}
F.~Liu, C.~Masouros, A.~Li, H.~Sun, and L.~Hanzo, ``{{MU-MIMO} Communications With {MIMO} Radar: From Co-Existence to Joint Transmission},'' \emph{IEEE Transactions on Wireless Communications}, vol.~17, no.~4, pp. 2755--2770, 2018.

\bibitem{1143128}
I.~Gupta and A.~Ksienski, ``{Effect of mutual coupling on the performance of adaptive arrays},'' \emph{IEEE Transactions on Antennas and Propagation}, vol.~31, no.~5, pp. 785--791, 1983.

\bibitem{11175425}
S.~Zeng, H.~Zhang, B.~Di, H.~Zhang, Z.~Shao, Z.~Han, H.~Vincent~Poor, and L.~Song, ``{Holographic Beamforming for Integrated Sensing and Communication With Mutual Coupling Effects},'' \emph{IEEE Journal on Selected Areas in Communications}, vol.~44, pp. 480--497, 2026.

\bibitem{11543322}
Y.~Sun, T.~Li, C.~Pan, D.~Xia, C.~Wang, H.~Ren, J.~Jin, M.~Lou, Q.~Wang, S.~Ma, Z.~Zhang, and J.~Wang, ``{Mutual Coupling-Aware {RIS}-Aided Integrated Sensing and Communication},'' \emph{IEEE Transactions on Communications}, vol.~74, pp. 9452--9467, 2026.

\bibitem{10153696}
Z.~Ren, L.~Qiu, J.~Xu, and D.~W.~K. Ng, ``{Robust Transmit Beamforming for Secure Integrated Sensing and Communication},'' \emph{IEEE Transactions on Communications}, vol.~71, no.~9, pp. 5549--5564, 2023.

\bibitem{11072251}
Z.~Chen, F.~Wang, G.~Han, X.~Wang, and V.~K.~N. Lau, ``{Robust Beamforming Design for Secure Near-Field ISAC Systems},'' \emph{IEEE Wireless Communications Letters}, vol.~14, no.~10, pp. 3089--3093, 2025.

\bibitem{10666854}
W.~Lyu, S.~Yang, Y.~Xiu, X.~Chen, Z.~Zhang, C.~Assi, and C.~Yuen, ``{Dual-Robust Integrated Sensing and Communication: Beamforming Under CSI Imperfection and Location Uncertainty},'' \emph{IEEE Wireless Communications Letters}, vol.~13, no.~11, pp. 3124--3128, 2024.

\bibitem{10056405}
M.~Luan, B.~Wang, Z.~Chang, T.~Hämäläinen, and F.~Hu, ``{Robust Beamforming Design for RIS-Aided Integrated Sensing and Communication System},'' \emph{IEEE Transactions on Intelligent Transportation Systems}, vol.~24, no.~6, pp. 6227--6243, 2023.

\bibitem{balanis2016antenna}
C.~A. Balanis, \emph{{Antenna Theory: Analysis and Design}}.\hskip 1em plus 0.5em minus 0.4em\relax John Wiley \& Sons, 2016.

\bibitem{11548574}
K.~Han, C.~Masouros, T.~Riihonen, and M.~G. Amin, ``{Next-Generation {MIMO} Transceivers for Integrated Sensing and Communications: Unique Security Vulnerabilities and Solutions},'' \emph{Proceedings of the IEEE}, vol. 114, no.~1, pp. 18--51, 2026.

\bibitem{1310320}
J.~Wallace and M.~Jensen, ``{Mutual coupling in {MIMO} wireless systems: a rigorous network theory analysis},'' \emph{IEEE Transactions on Wireless Communications}, vol.~3, no.~4, pp. 1317--1325, 2004.

\bibitem{9173030}
X.~Liu, T.~Huang, Y.~Liu, and J.~Zhou, ``{Joint Transmit Beamforming for Multiuser MIMO Communication and MIMO Radar},'' in \emph{2019 IEEE International Conference on Signal, Information and Data Processing (ICSIDP)}, 2019, pp. 1--6.

\bibitem{5765479}
T.~Zhang and W.~Ser, ``{Robust Beampattern Synthesis for Antenna Arrays With Mutual Coupling Effect},'' \emph{IEEE Transactions on Antennas and Propagation}, vol.~59, no.~8, pp. 2889--2895, 2011.

\bibitem{beck2006strong}
A.~Beck and Y.~C. Eldar, ``{Strong duality in nonconvex quadratic optimization with two quadratic constraints},'' \emph{SIAM Journal on optimization}, vol.~17, no.~3, pp. 844--860, 2006.

\end{thebibliography}
\end{document}